\def\makeLineNumberLeft{%
  \linenumberfont\llap{\hb@xt@\linenumberwidth{\LineNumber\hss}\hskip\linenumbersep}
  \hskip\columnwidth
  \rlap{\hskip\linenumbersep\hb@xt@\linenumberwidth{\hss\LineNumber}}\hss}
\title{Social Network Analysis of the Caste-Based Reservation System in India}
\author{Akrati Saxena\inst{1}  \and  Jaspal Singh Saini\inst{1} \and Yayati Gupta\inst{1}  \and Aishwarya Parasuram\inst{2}   \and Neeharika\inst{3} \and S.R.S. Iyengar\inst{1}}
\institute{Department of Computer Science and Engineering,
Indian Institute of Technology Ropar, India \newline
\email{\{akrati.saxena, jaspal.singh, yayati.gupta, sudarshan\}@iitrpr.ac.in}
\and
Google Alphabet \newline
\email{aparasuram@google.com}
\and
Department of CSE, India \newline
\email{neeharika@gmail.com}
\newline Corresponding Author: S. R. S. Iyengar
}
\begin{document}

\maketitle     
\begin{abstract}
It has been argued that the reservation system in India, which has existed since the time of Indian Independence (1947), has caused more havoc and degradation than progress. This being a popular public opinion, has not been based on any rigorous scientific study or research. In this paper, we revisit the cultural divide among the Indian population from a purely social network based approach. We study the distinct cluster formation that takes place in the Indian community and find that this is largely due to the effect of caste-based homophily. To study the impact of the reservation system, we define a new parameter called social distance that represents the social capital associated with each individual in the backward class. We study the changes that take place with regard to the average social distance of a cluster when a new link is established between the clusters which in its essence, is what the reservation system is accomplishing. Our extensive study calls for the change in the mindset of people in India. Although the animosity towards the reservation system could be rooted due to historical influence, hero worship and herd mentality, our results make it clear that the system has had a considerable impact on the country's overall development by bridging the gap between the conflicting social groups. The results also have been verified using the survey and are discussed in the paper.
\end{abstract}


\section{Introduction}
\subsection*{The Caste System in India}
In the context of the Indian society, caste is defined to be a Hindu hereditary class of socially equal persons, united in religion and usually following similar occupations, distinguished from other castes in the hierarchy by its relative degree of spiritual purity or pollution \cite{bayly2001caste}. It is said that the origin of the caste system is credited to the \emph{Vedas}, the mythological texts that claim the very basis of Hindu religion, according to which, the primal man destroyed himself to create a human society. The different castes were created from different parts of his body. The \emph{Brahmins} (scholar or priest class) were created from his head, the \emph{Kshatriyas} (soldier class) from his arms, the \emph{Vaishyas} (business class) from his thighs, and the \emph{Shudras} (menial labor class) from his feet. The hierarchy in the caste is determined by the descending order of importance of his body organs.

Another religious theory claims that the caste system was created from the body organs of Brahma, who is believed to be the creator of the world according to the Hindu religion. This stratification, though an obvious myth, has stayed in the Indian society since time immemorial.

\subsection*{Economic Impact of the Caste System}


Although the caste groups were supposedly divided along the lines of spiritual purity, they soon came to determine inflexible occupational roles. The downside of this system surfaced with the birth of caste-based discrimination, which is still a dominant phenomenon today. As time progressed, one's caste became intrinsically linked with one's wealth, social status, and even entry into public places. The society became largely dominated by the so called upper castes, and the rest were denied economic freedom, forced to work menial jobs and prevented from trying to improve their economic status. This led to the concentration of assets, social capital, and power in the hands of one section of the society. Therefore, while the socially forward classes progressed, the socially backward classes lagged behind in terms of literacy rates, education levels, income levels, and other measures of socio-economic well-being.

\subsection*{The Reservation System in India from a Network Theory Perspective}
To balance this routine of prejudicial social stratification, affirmative steps were undertaken to uplift the backward classes, and the reservation system was introduced wherein a certain number of seats are reserved for the members of socially and economically backward classes at the places of higher education and government jobs. However, it was soon met with a lot of backlash from the socially forward community, who felt that this system is not meritorious, and provides an undue advantage to members of the socially and economically backward classes.

In this paper, we study the existing reservation system from a pure network-theoretic perspective. The network science concepts like homophily \cite{homophily}, weak ties \cite{sowt}, social distance, opinion formation \cite{strang1998diffusion}, influence propagation \cite{welch1992sequential,banerjee1992simple,saxena2015understanding}, can help to explain the positive outcomes of the reservation system. Here, we consider two social communities, the \emph{socially forward and uplifted} (FC) and the \emph{socially backward and downtrodden} (BC). We establish why and how the reservation system maintains a very good balance between the two.

Past studies in network analytics by Jackson has shown that network formation and subsequent interaction between the nodes is highly influenced by homophily \cite{new-3}. In India, associations amongst the people are seen to be largely determined by caste-based homophily, hence, for the purposes of our study, we choose to term the network formation pattern among the Indian population as a \emph{caste-based homophilic network}.

Our motivation comes from the existence of a tangible strength associated with every weak tie, as proposed by Granovetter in his famously cited theory of the \textit{Strength of Weak Ties} \cite{sowt}. He observed the importance of weak ties to get new opportunities. This is a very prominent network phenomenon that has been ignored in past studies of the reservation system, which we have chosen as the key element of our study. We assume whenever a reservation is given, it motivates a weak link between BC and FC.  The mathematical model aids us in finding the number of links between the two communities as a measure of stability for the large scale social structure. We study the cumulative social capital of the backward classes on discrete time steps and observe how it changes when this system is in place.


The proposed model is a modified but simple and natural model, where it is common knowledge that the state of the world changes deterministically over time, as new network connections are added through time steps. As our main contribution, we introduce in this paper the prominent role played by the strength of weak ties in alleviating the divide between the caste groups in the Indian scenario. We find it sufficient to insert a minimal number of links between the two clusters, in order to foster harmonic relations between the two conflicting groups. As a long term aim of the reservation system, we see benefits reasonably distributed evenly among members from the forward as well as the backward communities. However, the current statistics show a clear tip in the balance favoring the socially forward community, with a majority of the country's shared resources such as education, wealth, and land-holdings, being in the possession of or being accessible to only one section of the society. This undesirable disparity can be seen as the result of many recent studies, including the works of Kumar and Rustagi \cite{disp-1}, Sedwal and Kamat \cite{disp-2} and Biradar \cite{isec}.

The effects of the absence of the system can not be studied in the present day scenario due to the obvious reasons, however, a similar study has been performed by Borooah et al. \cite{borooah2007effectiveness}. The study took into account a social group within the country which was of the same social, educational and economic status as the Scheduled Castes (SC) and Scheduled Tribes (ST) in the pre-independence era. However, the condition of this group was observed to be at much more elevated state, proving the efficiency of the Reservation System. Many such similar comparative studies along with extant evidence only bolster the claim that this system is indeed a beacon of hope for the social disparity in India \cite{new-7,samp-1,samp-2,samp-9}.

We also conducted a survey\footnote{The opinion survey details and results are placed in appendix D. In survey, $66.47\%$ people are from non-reserved category and the remaining are from the reserved category.} among people who belong to various educational institutions following reservation system. These people are affected directly by the caste based reservation system. After garnering 1005 responses from the survey, we came up with few observations that are discussed below.
\begin{itemize}
\item When we inquired the FC students regarding their meritorious opinion on BC students, they told $25.11\%$ of BC students they interacted with had broken the stereotype associated with the reserved category students.
\item The second observation was that the students from BC background, who were able to gain admission into the educational institution through reservation, are able to influence $57.57\%$ of their younger siblings and friends from their community to try and gain admission into the institutions, causing genuine competition and increase of count on merit performance from BC background.
\end{itemize}
Both of these observations show the impact of \textit{Forward and Backward breeze effects} caused by the reservation system. This is discussed in detail in section 3, and the mathematical model is build based on these observations.

The rest of this paper is organized as follows. Next, we discuss the required preliminaries followed by the network-based analysis of caste reservation system. The simulation results are discussed in Section 4. The paper is concluded in Section 5. The proposed model has various future directions that are also discussed in the conclusion.

\section{Preliminaries and Definitions}

Let $G(V,E)$ represents the undirected social network under consideration, and  $G_1(V_1,E_1)$ and $G_2(V_2,E_2)$ represent the induced subgraphs of $G$, where $V_1$ is the set of all BC nodes and $V_2$ is the set of all FC nodes. Therefore, $V_1\cup V_2 = V$ and $V_1 \cap V_2 = \emptyset $. Let $n_1$ and $n_2$ be the shorthand notation for the number of individuals in the BC and FC respectively i.e.\ $|V_1| = n_1$ and $|V_2| = n_2$. 
 
We define the edge set $B$ as ($E - (E_1\cup E_2)$) i.e. $B$ consists of precisely those edges $\{u,v\}$, where $u \in V_1$ and $v \in V_2$, we will henceforth address these edges as \emph{bridges}.
The distance $d(u,v)$ between two nodes $u$ and $v$ represents the length of the shortest path between $u$ and $v$. For each $u \in V_1$ i.e. a person belonging to BC, we define $d^{*}_u$ as,
\begin{center}
$d^{*}_u = \mbox{ min }\{k | \exists v \in V_2 \ni d(u,v) = k\} $
\end{center}
Therefore, $d^{*}_u$ is the minimum distance from node $u$ at which it will find at least one node of $V_2$. We will refer to this parameter as the \emph{social distance} of node $u$ from the FC.

A path $\langle v_1, v_2, v_3,...,v_k \rangle$ is called an \emph{entry path} if $v_k \in V_2$ and $v_i \in V_1$ $\forall 1 \leq i \leq k-1$. Therefore, if $d^{*}_u = l$, then $l$ is the length of the shortest entry path starting from node $u$.


\section{Caste Reservation System: A Network Analysis Approach}

The homophily observed in the social structure under consideration is \emph{selection} based \cite{homophily} i.e. the common characteristics that bound people together are immutable, in this case, it is the caste of an individual. What the Reservation System does in essence is, it picks a BC individual and gets it in contact with a group of closely knit FC individuals. For example, a BC student getting a seat in a university through the reservation, implicitly creates friendship ties with a group of close FC students.  The addition of such bridges has two-fold benefits, we term these the \emph{forward breeze} effect and the \emph{backward breeze} effect. The \emph{forward breeze} represents the change in the mindset of the FC, on coming in contact with the BC and the \emph{backward breeze} represents the increased motivation felt by the BC to achieve upliftment, by being influenced by FC members close to them.   

Next, our aim is to calculate the gain in the social capital of the BC as a function of the bridges added in the network. There exists no universal definition or technique for measuring social capital \cite{def-sc}, this can be attributed to the inherent subjectivity in the concept of social capital. However, in an exhaustive survey \cite{sc-survey} , the author differentiates between the social capital of different types:
\begin{enumerate}
\item social capital of an individual with respect to her position in the social network.
\item social capital of a group with respect to the underlying relationships within the group.
\item social capital of a group with respect to the network topological connections to other groups.
\end{enumerate}

In the Caste Reservation scenario, the cumulative social capital to be calculated falls under category 3. The social capital of category 3 was first studied in \cite{sc-cat3-intro}, where the author suggests that the teams with strong outside connections generally performed better compared to groups with weaker connections outside their group. Everett and Borgatti proposed a network measure termed \emph{group centrality} to quantize social capital of type 3 \cite{sc-group-cen}. We adopt a modified version of this definition, which fits well with our model. We define the cumulative social capital of BC as the linear sum of social capital of all the individuals present in BC. Further, for every individual $u$ in BC, we assume its social distance ($d^{*}_u$) to be a direct measure of its social capital. Lower the social distance of an individual $u$, higher is its social capital and vice versa.

As stated earlier, a person from BC getting reservation implies that she has an opportunity to form ties with a set of closely knit FC individuals. But for the sake of our analysis, we assume that only one tie exists per reservation i.e. all this bunch of weak ties is equivalent to one bridge while calculating the social capital of BC. This is a safe assumption, since, we are measuring social capital as a function of distance, which will rarely change for a pair of nodes in the network when we remove multiple copies of similar functioning edges. However, these multiple edges with respect to one reservation are not equivalent to one bridge in every aspect. For example, the presence of multiple weak ties amplifies the strength of the bridge across, in a sense that, even if one link breaks in the future, it does not influence the network topology or social capital significantly.

Our aim is to analyze the fall in $d^{*}_{u}$ ($u \in V_1$) as a function of the number of random bridges added in the system. Generally, the social networks depict scale free degree distribution \cite{bamodel} i.e. $P(degree (u) = k) \propto k^{-\gamma}$, where $2 < \gamma < 3$. Hence, ideally we must consider both commuities to be scale free graphs. However, for making the analysis of social distance  more tractable we will assume $G_1$ and $G_2$ to be \emph{Erdos-Renyi} random graphs \cite{erdos} with parameters $(n_1, p_1)$ and $(n_2, p_2)$.  Empirically, the social distance i.e. $d^{*}_{u}$ falls at nearly the same rate, independent of whether we consider communities to be scale free graphs or random graphs for the same number of nodes and edges, as shown in figure 1.  

Further, for every $u \in V_1$ and $v \in V_2$, the edge $\{u,v\}$ is present with probability $b$, which we term as the \emph{bridging probability}. We define this new probability space of graphs as \emph{Coupled Erdos-Renyi Graphs} where both communities are represented using two \emph{Erdos-Renyi} random graphs and reservation links are represented as bridge edges placed between them.

\subsection{Social Distance Analysis on Coupled Erdos-Renyi Graphs}
The two classes BC and FC are represented by two \emph{Erdos-Renyi} random graphs $G_1(n_1,p_1)$ and $G_2(n_2,p_2)$ respectively, where $V(G_1)= \{1,2,3,\ldots,n_1\}$ and $V(G_2) = \{n_1+1,n_1+2,\ldots, n_1+n_2 \}$. All the results proved in this paper will be for asymptotically large graphs $G_1$ and $G_2$ i.e.\ $n_1 \rightarrow \infty$ and $n_2 \rightarrow \infty$.  Every possible edge across the two graphs ($n_1n_2$ in total) is added with the bridging probability $b$. 

Let $u$ represents an arbitrary node of BC. Our analysis is aimed to calculate $d_u^{*}$ i.e. the social distance of node $u$ from the FC. We begin by developing a few preliminary results.
\begin{lemma}\label{lemma1}
$\dfrac{n!}{(n-l)!} \sim$\footnote{ $f(n) \sim g(n)$ if $\underset{n \rightarrow \infty}{\lim} f(n)/g(n) = 1$ } $n^l$ as $ l^2/n \sim 0$.
\end{lemma}

The proof of this lemma is provided in Appendix A. \\

Let $M_l$ represents the total number of possible entry paths of length $l$ with $u$ as one of its endpoint. The next lemma provides an approximation for the constant $M_l$ as a function of $l$. 

\begin{lemma}\label{lemma2}
$M_l \sim n_2(n_1)^{l-1}$ as $ l^2/n_1 \sim 0$.
\end{lemma}

\begin{proof}
To construct an entry path of length $l$ with $u$ as one of its endpoint, we need a vertex from $V_2$ and a sequence of $l-1$ vertices from $V_1-u$ i.e. $1$ node is to be selected from $n_2$ nodes and $l-1$ nodes are to be selected from $n_1-1$ nodes, and these $l-1$ selected nodes can be permuted in $(l-1)!$ ways. 
\begin{align*}
\implies M_l &= {n_2 \choose 1}{n_1-1 \choose l-1} (l-1)! \\
 &= n_2 \dfrac{(n_1-1)!}{(n_1-l)!}  \\
 &\sim n_2 (n_1)^{l-1}	\tag{ from Lemma \ref{lemma1}} \\
\end{align*}
\qed
\end{proof}

Further, let $X_l$ represents a random variable which is equal to the number of entry paths of length $l$ with $u$ as one of its endpoint. Our next result calculates the average number of entry paths of length $l$ originating from the BC node $u$.

\begin{lemma}\label{lemma3}
$E[X_l] \sim (n_2b)(n_1p_1)^{l-1}$ as $ l^2/n_1 \sim 0$
\end{lemma}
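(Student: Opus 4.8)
The plan is to compute $E[X_l]$ by linearity of expectation over the collection of all $M_l$ potential entry paths of length $l$ having $u$ as an endpoint, whose count was pinned down in Lemma~\ref{lemma2}. For each such potential path $P$, I would introduce the indicator random variable $\mathbf{1}_P$ that equals $1$ precisely when all $l$ edges of $P$ are actually present in the network, so that $X_l = \sum_P \mathbf{1}_P$ and hence $E[X_l] = \sum_P \Pr[P\ \text{present}]$.

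The next step is to evaluate $\Pr[P\ \text{present}]$ for a single potential entry path $P = \langle u = v_1, v_2, \ldots, v_l, v_{l+1}\rangle$ with $v_1,\ldots,v_l \in V_1$ and $v_{l+1}\in V_2$. Such a path is made up of $l-1$ edges lying entirely inside $G_1$, each present independently with probability $p_1$, together with exactly one bridge $\{v_l, v_{l+1}\}$, present independently with probability $b$. Since the edges of $G_1$, the edges of $G_2$, and the bridges are all drawn mutually independently in the model, this gives $\Pr[P\ \text{present}] = p_1^{\,l-1} b$, which is the same for every potential path $P$.

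Combining the two observations, $E[X_l] = M_l \cdot p_1^{\,l-1} b$, and substituting the approximation $M_l \approx n_2 (n_1)^{l-1}$ from Lemma~\ref{lemma2} (valid under the hypothesis $l^2/n_1 \to 0$) yields $E[X_l] \approx n_2 (n_1)^{l-1} p_1^{\,l-1} b = (n_2 b)(n_1 p_1)^{l-1}$, as claimed.

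The calculation is short, so there is no real ``hard part''; the one place that invites a misstep is that the events ``$P$ is present'' for distinct potential paths are strongly correlated (two paths may share edges), so the $\mathbf{1}_P$ are far from independent — but this is immaterial for a first-moment computation, since linearity of expectation needs no independence. A secondary point to record is that the $M_l$ potential paths counted by Lemma~\ref{lemma2} are vertex-simple entry paths by construction (they are ordered sequences of distinct vertices, $l-1$ of them from $V_1 - u$ and one from $V_2$), so no double counting occurs, and the restriction $l^2/n_1 \to 0$ enters only through the invocation of Lemma~\ref{lemma2}.
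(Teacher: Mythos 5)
Your proposal is correct and follows essentially the same route as the paper: decompose $X_l$ as a sum of indicators over the $M_l$ potential entry paths, note each is present with probability $p_1^{l-1}b$ ($l-1$ edges inside $G_1$ plus one bridge), and multiply by the count from Lemma~\ref{lemma2}. Your added remarks on independence being unnecessary for the first moment and on the paths being vertex-simple are accurate but not needed beyond what the paper records.
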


\begin{proof}

\begin{align*}
X_l& = \sum_{i=1}^{M_l}Y_i	\\
\mbox{ where }	Y_i &=
\left\{
	\begin{array}{ll}
		1  & \mbox{if $j^{th}$ entry path is present} \\
		0 & \mbox{otherwise }
	\end{array}
\right.		\\
\implies E[X_l] &= \sum_{i=1}^{M_l}E[Y_i]	\tag{using linearity of expectation}\\
\end{align*}
An entry path $P=\langle v_{\alpha_1}$, $ v_{\alpha_2}$, $v_{\alpha_3}$, $\ldots$, $v_{\alpha_{l+1}} \rangle$ of length $l$ exists if $(l-1)$ edges ($\{v_{\alpha_1}$,$v_{\alpha_2}\}$,$\{v_{\alpha_2}$, $v_{\alpha_3}\}$, $\ldots$, $ \{v_{\alpha_{l-1}}$, $v_{\alpha_{l}}\}$) are present in $G_1$ and the bridge $\{v_{\alpha_l},v_{\alpha_{l+1}}\}$ is also present. Therefore, the probability that the entry path $P$ exists, is equal to $(p_1)^{l-1}b$. 
\begin{align*}
\implies E[X_l] &= M_l (p_1)^{l-1}b	 \\
 &\sim (n_2b)(n_1p_1)^{l-1}	\tag{from Lemma \ref{lemma2}}
\end{align*} 
\qed
\end{proof}

\paragraph*{•}
We are interested only in the case where $b < 1/n_2$, since for $b \geq 1/n_2$, expected number of bridges per node in BC will be greater than or equal to $1$, which is unrealistic in the Caste Reservation scenario. Henceforth, throughout the analysis, $b$ is assumed to be less than $1/n_2$.

\begin{theorem}\label{thm1} 
For a random graph $G_{n,p}$, $p_0 = \log(n)/n$ is the threshold probability for the property of connectedness. 
\end{theorem}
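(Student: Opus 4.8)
The plan is to run the classical two-sided argument for the connectivity threshold: below $\log n/n$ I will exhibit an obstruction to connectedness that is present asymptotically almost surely (a.a.s.), and above $\log n/n$ I will rule out every possible obstruction. Write $p = c\log n/n$. The guiding principle is that near this density the \emph{only} relevant obstruction is an isolated vertex; every other bad configuration — a connected component of size between $2$ and $n/2$ — is much rarer. Accordingly I let $X_0$ denote the number of isolated vertices of $G_{n,p}$ and analyse it by moment methods, handling intermediate-size components separately by a union bound.

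\emph{Disconnectedness for $c<1$.} Take $p = (1-\epsilon)\log n/n$. A fixed vertex is isolated with probability $(1-p)^{n-1}$, so by linearity $E[X_0] = n(1-p)^{n-1}$; using $1-p = e^{-p+O(p^2)}$ this is $n^{\,\epsilon+o(1)} \to \infty$. To upgrade this to $X_0 \geq 1$ a.a.s.\ I would apply the second moment method: the events ``vertex $i$ is isolated'' and ``vertex $j$ is isolated'' depend on nearly disjoint edge sets, overlapping only in the pair $\{i,j\}$, so a short computation gives $\mathrm{Var}(X_0) = o(E[X_0]^2)$, and Chebyshev's inequality yields $X_0 > 0$, hence $G_{n,p}$ disconnected, a.a.s.

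\emph{Connectedness for $c>1$.} Take $p = (1+\epsilon)\log n/n$. Now $E[X_0] = n^{\,-\epsilon+o(1)} \to 0$, so by Markov's inequality $G_{n,p}$ a.a.s.\ has no isolated vertex. It remains to show that a.a.s.\ there is no connected component of size $k$ for any $2 \le k \le n/2$. For a fixed $k$-set $S$, the event ``$S$ is a component'' forces $G[S]$ to contain a spanning tree and forces all $k(n-k)$ edges between $S$ and its complement to be absent; these involve disjoint edge sets, so by Cayley's formula the probability is at most $k^{k-2}\,p^{k-1}\,(1-p)^{k(n-k)}$. Summing over the $\binom{n}{k}$ choices of $S$ and then over $k$, and using $\binom{n}{k} \le (en/k)^k$ together with $(1-p)^{k(n-k)} \le e^{-pk(n-k)} \le e^{-pkn/2}$, I would show the total is $o(1)$. \textbf{This union bound is the main technical obstacle}: the series does not converge ``in one shot'', so I would split it — bounding small $k$, where the dominant term behaves like $(n\cdot e^{-pn})^k = n^{-\epsilon k}$ (a convergent geometric-type sum), separately from $k$ of order $n$, where the factor $(1-p)^{k(n-k)}$ decays faster than any competing entropy term $(en/k)^k k^{k-2} p^{k-1}$.

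Finally, a graph on $n$ vertices with no isolated vertex and no component of size in $[2,n/2]$ is connected, so the two halves together give: $G_{n,p}$ is a.a.s.\ disconnected when $c<1$ and a.a.s.\ connected when $c>1$, i.e.\ $p_0 = \log n/n$ is the threshold for connectedness in the usual Erdős--Rényi sense. I would close with a remark that tracking $p = (\log n + \omega(n))/n$ with $\omega(n)\to\infty$ sharpens the conclusion to the exact one-point concentration statement, though the coarse threshold asserted here is all that is needed for the paper.
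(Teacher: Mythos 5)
Your proposal is the classical Erd\H{o}s--R\'enyi connectivity argument, and in outline it is correct: isolated vertices are the dominant obstruction, the second moment method gives $X_0>0$ a.a.s.\ below the threshold, and the first moment on $X_0$ combined with a union bound over components of size $2\le k\le n/2$ (via Cayley's formula and the cut-edge factor $(1-p)^{k(n-k)}$) gives connectedness above it. The paper, however, does not prove this statement at all --- its entire ``proof'' is a citation to Bollob\'as's \emph{Modern Graph Theory} --- so you have supplied the substance that the paper delegates to the literature. The one place where your write-up is a sketch rather than a proof is the step you yourself flag: the summation $\sum_{k=2}^{n/2}\binom{n}{k}k^{k-2}p^{k-1}(1-p)^{k(n-k)}=o(1)$ is asserted with a strategy (split small $k$ from $k=\Theta(n)$, use $(en/k)^k$ and $e^{-pkn/2}$) but not carried out; the strategy is the standard one and does go through, since for small $k$ each term is of order $n^{1-(1+\epsilon)k/2}$ up to polylogarithmic factors (already $o(1)$ and geometrically decreasing for $k\ge 2$), while for $k$ of order $n$ the factor $n^{-(1+\epsilon)k/2}$ crushes the entropy term $(2e)^k(\log n)^k$. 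Given that the paper itself offers only a pointer to a textbook, your explicit argument is, if anything, more informative than the original; it would be acceptable either to complete the union-bound computation or to follow the paper and simply cite the reference.
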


A detailed proof is available at \cite{bollobas}. \\

Since the two considered graphs $G_1$ and $G_2$ are connected, the above lemma provides a lower bound on $p_1$ and $p_2$ and hence on the density of the graphs $G_1$ and $G_2$. Therefore, $n_1p_1 > log(n_1)$ and $n_2p_2 > log(n_2)$. 
 
Next, we analyze the quantity $X_l$ i.e. the number of entry paths from node $u$ as a function of $l$. For small values of $l$ the number of entry paths $X_l$ will be negligible ($<<1$). Our aim is to find the smallest distance $d$ such that there exists at least one entry path of length $d$ from node $u$. We prove that distance $d$ is equal to $\log_{(n_1p_1)}(1/n_2b)+1$. Henceforth, for the sake of simplicity, we will represent the quantity $\log_{(n_1p_1)}(1/n_2b)$ by $d_0$.

\begin{theorem}\label{thm2} 
The probability to exist an entry path of length less than or equal to $d_0$ with $u$ as its endpoint is almost equal to zero i.e. $P(X_{i} = 0) \sim 0 $ for $1 \leq i \leq d_0$.
\end{theorem}
\begin{proof}
\begin{align*}
P(X_{i} \geq a) &\leq E[X_{i}]/a		\tag{using Markov's inequality}	\\
\implies P(X_{i} \geq 1) &\leq E[X_{i}]		\\
 &\sim (n_2b)(n_1p_1)^{i-1}	\tag{from Lemma \ref{lemma3}}	\\
  &\leq (n_2b)(n_1p_1)^{d_o-1}	\tag{Since $i < d_0$}	\\
 &= \dfrac{1}{n_1p_1}	\\
&< \dfrac{1}{\log(n_1)}		\tag{ from Theorem \ref{thm1} }	\\
\implies P(X_{i} \geq 1) &\rightarrow 0		\\
\implies P(X_{i} = 0) &\rightarrow 1
\end{align*}
\qed
\end{proof}

Further, we will prove that almost always (i.e. with probability close to one) there exists at least one entry path of length $d_0+1$ from $u$. Hence, proving our claim that $d_u^{*} = d_0+1$.  

\begin{lemma}\label{lemma4}
For any random variable $X$, $P(X = 0) \leq \dfrac{\sigma^{2}_{X}}{\mu^2_X}$, where $\sigma_X$ and $\mu_X$ represent the variance and mean of the random variable $X$ respectively.
\end{lemma}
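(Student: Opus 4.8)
The plan is to obtain this as a direct consequence of Chebyshev's inequality, which itself is just Markov's inequality — already used in Theorem~\ref{thm2} — applied to the nonnegative random variable $(X-\mu_X)^2$. Throughout I would tacitly assume $\mu_X \neq 0$, so that the right-hand side is well defined and the statement is non-vacuous; this is precisely the regime in which the lemma will be applied, since it will be invoked with $X = X_{d_0+1}$ and $E[X_{d_0+1}] > 0$.

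First I would record the elementary set inclusion underlying the argument: if $X = 0$ then $|X - \mu_X| = |\mu_X| \geq |\mu_X|$, so the event $\{X = 0\}$ is contained in the event $\{\,|X - \mu_X| \geq |\mu_X|\,\}$. Monotonicity of probability immediately gives $P(X = 0) \leq P(\,|X - \mu_X| \geq |\mu_X|\,)$.

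Next I would bound the right-hand side by applying Markov's inequality to the nonnegative random variable $(X-\mu_X)^2$ with threshold $\mu_X^2$:
\begin{align*}
P\bigl(\,|X - \mu_X| \geq |\mu_X|\,\bigr) &= P\bigl((X-\mu_X)^2 \geq \mu_X^2\bigr) \\
&\leq \frac{E[(X-\mu_X)^2]}{\mu_X^2} = \frac{\sigma^{2}_{X}}{\mu^2_X}.
\end{align*}
Chaining this with the inclusion from the previous step yields $P(X=0) \leq \sigma^{2}_{X}/\mu^2_X$, which is the claim. (For the nonnegative integer-valued random variables that actually arise here, one could alternatively derive the same bound from the Cauchy--Schwarz inequality applied to $X \cdot \mathbf{1}_{\{X \neq 0\}}$, but the Chebyshev route above is shorter and fully general.)

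There is essentially no hard step in this lemma; the only points needing a moment's care are the degenerate case $\mu_X = 0$ (where the statement carries no content) and getting the direction of the set inclusion right in the first step. The real work lies downstream: when this lemma is combined with the value of $E[X_{d_0+1}]$ from Lemma~\ref{lemma3}, the substantive effort will be in estimating the variance $\sigma^{2}_{X_{d_0+1}}$ and showing $\sigma^{2}_{X_{d_0+1}}/\mu^{2}_{X_{d_0+1}} \to 0$, not in establishing the inequality stated here.
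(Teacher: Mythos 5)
Your proof is correct and follows essentially the same route as the paper: apply Chebyshev's inequality with threshold $|\mu_X|$ and observe that the event $\{X=0\}$ is contained in $\{|X-\mu_X|\geq |\mu_X|\}$. Your version is slightly more careful than the paper's in flagging the degenerate case $\mu_X = 0$ and in writing $|\mu_X|$ rather than $\mu_X$, but the argument is the same.
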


Its proof is discussed in Appendix B.

\begin{lemma}\label{lemma5}
The standard deviation of the random variable $X_{d_0+1}$ approaches zero i.e. $\sigma_{X_{(d_0+1)}} \sim 0$
\end{lemma}

\begin{proof}
\begin{align*}
X_l& = \sum_{i=1}^{M_l}Y_i	\\
\implies X_l^2& = \sum_{i=1}^{M_l}\sum_{j=1}^{M_l} Y_iY_j = \sum_{k=0}^{l} Z_k	\\
\end{align*}
where $Z_k$ accounts for all $Y_iY_j$\rq s, where the $i^{th}$ and $j^{th}$ entry paths have precisely $k$ edges in common. Let $|Z_k|$ represent the number of terms in $Z_k$\rq s summation. 

\begin{align*}
|Z_0| \geq {n_1-1 \choose l-1}(l-1)! {n_2 \choose 1} {n_1 - l \choose l-1} (l-1)! {n_2 - 1 \choose 1}	
\end{align*}

If none of the vertices in the two entry paths are common, then certainly none of its edges are common either, this gives us the above inequality. 
\begin{align*}
 {n_1-1 \choose l-1}(l-1)! {n_2 \choose 1} {n_1 - l \choose l-1} (l-1)! {n_2 - 1 \choose 1} &= \dfrac{(n_1-1)!}{(n_1-2l+1)!} n_2(n_2-1)	\\ 
 &\sim \dfrac{n_2(n_2-1)}{n_1(n_1-2l+1)}n_1^{2l}	\\ 
  &\sim n_2^2n_1^{2l-2}	
\end{align*}
The total number of terms in the summation of $X_l^2$ are $M_l^2$ i.e.\ approximately $n_2^2n_1^{2l-2}$. Therefore, most of the summation terms of $X_l^2$ fall into the basket of $Z_0$.
\begin{align*}
E^2[X_{d_0+1}] &\sim 1	\tag{from Lemma \ref{lemma3}}\\
\sigma_{X_{l}}^2 &= E[X_l^2] - E^2[X_d]	\tag{by definition}	\\
\implies \sigma_{X_{(d_0+1)}}^2 &= E[X_{d_0+1}^2] - E^2[X_{d_0+1}]	\\
\implies \sigma_{X_{(d_0+1)}}^2 &\sim 0
\end{align*}
\qed
\end{proof}

\begin{theorem}\label{thm3} 
Almost always there exists an entry path of length equal to $d_0+1$ with $u$ as its endpoint i.e. $X_{d_0+1} \geq 1 $.
\end{theorem}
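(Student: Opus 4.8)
The plan is to establish this by the second moment method, applying Lemma~\ref{lemma4} to the random variable $X_{d_0+1}$. Two ingredients are needed: that the mean $\mu_{X_{d_0+1}} = E[X_{d_0+1}]$ stays bounded away from $0$, and that the variance $\sigma^2_{X_{d_0+1}}$ tends to $0$. The second ingredient is already in hand — it is precisely Lemma~\ref{lemma5}.

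For the first ingredient, I would substitute the definition of $d_0$ into Lemma~\ref{lemma3}. Since $d_0 = \log_{(n_1p_1)}(1/n_2b)$, we have $(n_1p_1)^{d_0} = 1/(n_2b)$, and therefore
\begin{align*}
E[X_{d_0+1}] \approx (n_2b)(n_1p_1)^{d_0} = (n_2b)\cdot\dfrac{1}{n_2b} = 1 .
\end{align*}
In particular $\mu_{X_{d_0+1}}$ is bounded below (asymptotically equal to $1$). Having both ingredients, I would finish via Lemma~\ref{lemma4}:
\begin{align*}
P(X_{d_0+1} = 0) \leq \dfrac{\sigma^2_{X_{d_0+1}}}{\mu^2_{X_{d_0+1}}} \rightarrow \dfrac{0}{1} = 0 ,
\end{align*}
hence $P(X_{d_0+1}\geq 1)\rightarrow 1$, which is the statement. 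Combined with Theorem~\ref{thm2}, which rules out any entry path of length at most $d_0$, this yields $d^{*}_u = d_0+1$ almost always, the conclusion the section has been building toward.

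The step I expect to require the most care is the evaluation of the mean at $l = d_0+1$: the chain of approximations behind Lemmas~\ref{lemma2} and~\ref{lemma3} (Stirling's formula, and $n!/(n-l)!\to n^l$ from Lemma~\ref{lemma1}) is only valid when $l^2/n_1\rightarrow 0$, so one must confirm that $d_0^2/n_1\rightarrow 0$ under the standing hypotheses; this is where $n_1p_1 > \log n_1$ from Theorem~\ref{thm1} is used, since it forces the base of the logarithm to grow and keeps $d_0$ from becoming too large. A minor bookkeeping point is that $d_0$ need not be an integer, so strictly one should work with $\lceil d_0\rceil + 1$; this only pushes the mean up to a value $\geq 1$, which does no harm to the ratio $\sigma^2_{X}/\mu^2_{X}$. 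Everything else is direct substitution into results already proved.
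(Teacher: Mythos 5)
Your proposal is correct and follows exactly the paper's argument: apply Lemma~\ref{lemma4} to $X_{d_0+1}$, using Lemma~\ref{lemma3} to see that $\mu_{X_{d_0+1}} \approx 1$ and Lemma~\ref{lemma5} for $\sigma^2_{X_{d_0+1}} \rightarrow 0$. Your additional remarks on the validity of the approximations and the non-integrality of $d_0$ are careful points the paper glosses over, but the core argument is identical.
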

\begin{proof}
\begin{align*}
P(X_{(d_0+1)} = 0 ) &\leq  \dfrac{ \sigma^{2}_{X_{(d_0+1)}}}{\mu^2_{X_{(d_0+1)}} } \tag{using Lemma \ref{lemma4}}	\\
\implies P(X_{(d_0+1)} = 0 ) &\sim 0	\tag{using Lemma \ref{lemma3} and \ref{lemma5}}\\
\implies P(X_{(d_0+1)} \geq 1 ) &\sim 1 
\end{align*}
\qed
\end{proof}

Therefore, almost always $d^{*}_u = \log_{n_1p_1}(1/(n_2b)) +1$. Since, this formula is independent of $u$, almost all nodes in the BC have social distance $(d_0+1)$. Let $x$ represents the expected number of bridges added in the system.

\begin{align*}
\implies x &= n_1n_2b \\
\implies d^{*}_i &= \log_{n_1p_1}(n_1/x) +1		
\end{align*}

\begin{equation}
d^{*}_i = \dfrac{\log(n_1) - \log(x)}{\log(n_1p_1)} +1
\end{equation}

The above theorem proves that the social distance of any arbitrary node $i$ reduces logarithmically as a function of the number of bridges in the system. Therefore, only the first few bridges are highly effective in reducing the distance between the two communities, and the bridges that are added later on, don't bring a significant change in the social capital of an individual present in BC. 

\section{Simulation Results}

\subsection{Network Models}

We have used following synthetic network generative models to study the impact of adding bridges on the social distance between both communities.

\begin{enumerate}
\item Erdos-Renyi (ER) Model: In 1969, Erdos and Renyi proposed a model to generate random networks \cite{erdos}. In this model, there are $n$ nodes and an edge is placed between a pair of nodes with some fixed probability $p$. The mathematical analysis of the proposed model is explained for such type of networks.

\item Barabasi-Albert (BA) Model: In 1999, Barabasi and Albert observed that real world networks are not random but scale-free\footnote{The detailed model is explained in appendix C.} \cite{bamodel}. They are based on the rich-gets-richer phenomenon, where a node having higher degree has a high probability of getting new connections. The degree distribution of scale-free networks follows power law, so the probability of a node having degree k is defined as $ck^{-\gamma}$, where $c$ is a constant and $\gamma$ is the power law exponent. 
\end{enumerate}

To simulate the proposed model, both FC, as well as BC community, are generated using the same model and having the same properties like network size, density, etc. 


\subsection{Discussion}

In this section, firstly, we discuss the verification of math model using simulation. Secondly, we study how the effective social distance is reduced for different types of networks.

The verification of math model is shown in Figure \ref{fig:fig1}, where x-axis shows the number of bridges and y-axis shows the average social distance. To compute the average social distance, first the social distance is computed for each node of BC and then its average is taken. The experiment is repeated 10 times to compute the average social distance for the different number of bridges. In figure \ref{fig:fig1}, red color shows the average social distance computed using equation 1 of the proposed model and blue color shows the average social distance computed using simulation on coupled Erdos-Renyi networks.

In the math model, the average social distance is converged to $1$ when $n1$ bridges are placed as each BC node will be directly connected to one FC node. But in coupled Erdos-Renyi network, the average social distance is not converged to 1 (close to 1) when $n_1$ bridges are placed as the bridges are placed uniformly at random and one node of BC can be connected with multiple nodes of FC.

As we know, real-world networks possess scale-free structure, next, we study the average social distance on scale-free BA networks. The results are shown in figure \ref{fig:fig1}, where the average degree of the network is same as the corresponding ER networks. The results show that the scale-free networks follow the same pattern for average social distance and it decreases logarithmically. The mathematical analysis of these models is left as the future work. The similar results are obtained for the networks of different sizes and densities.

\begin{figure}[ht!]
     \begin{center}
        \subfigure[]{%
            \label{fig:first}
            \includegraphics[width=6cm]{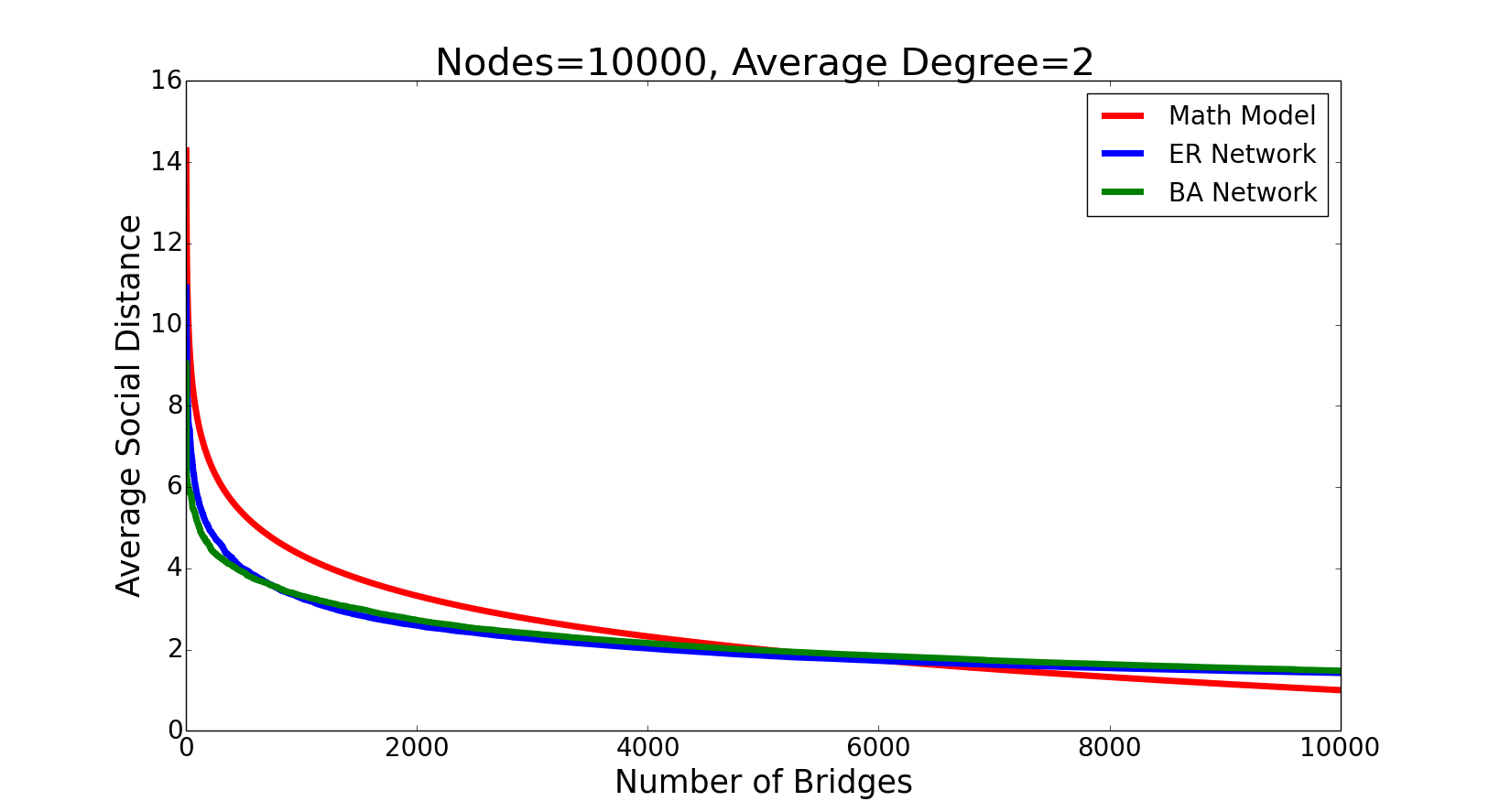}   
        }%
        \subfigure[]{%
           \label{fig:second}
           \includegraphics[width=6cm]{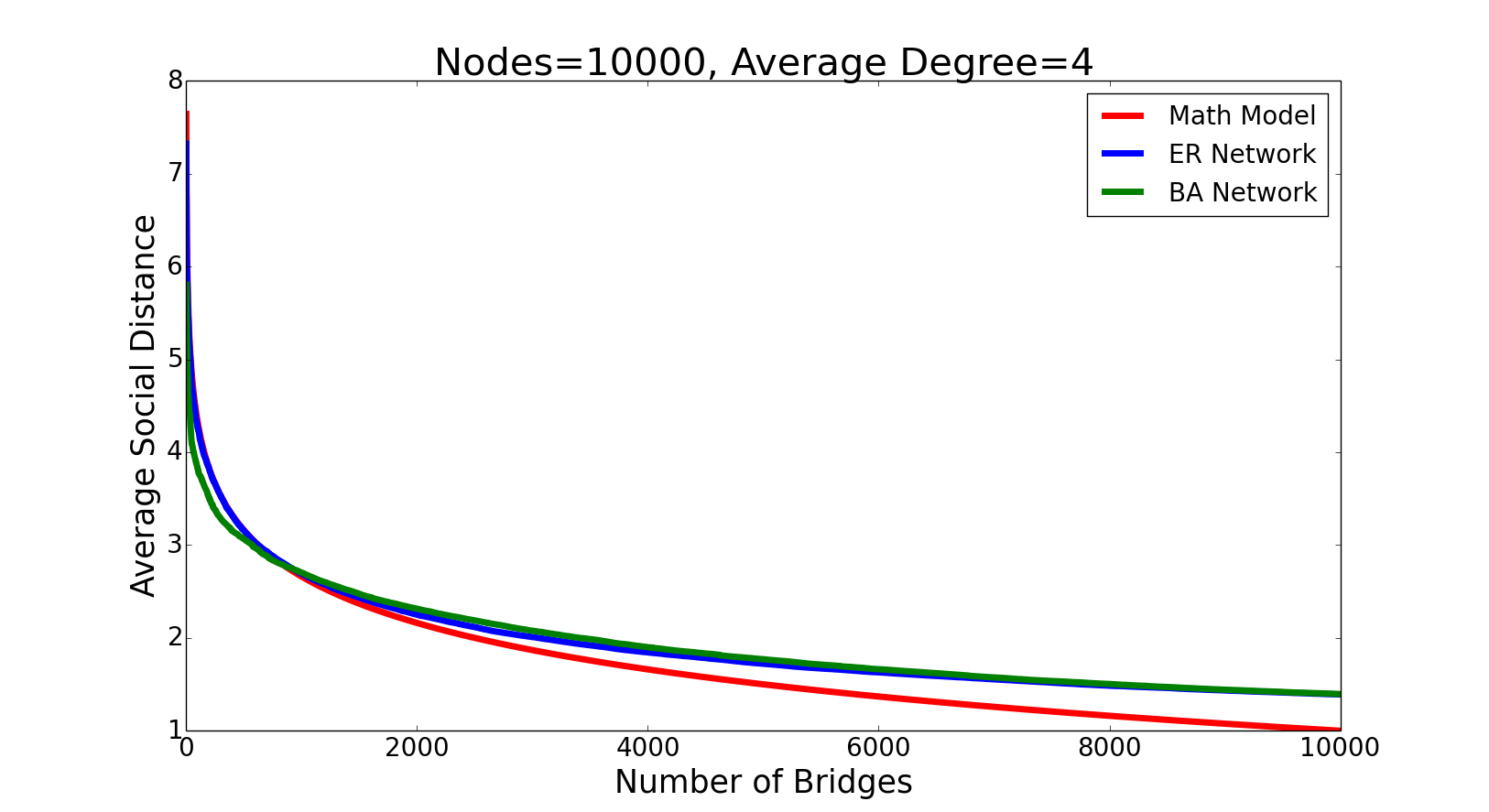}
        }
    \end{center}
    \caption{Average social distance versus number of bridges}
   \label{fig:fig1}
\end{figure}

Real-world networks posses meso-scale structures like community and core-periphery structure \cite{saxena2016evolving,gupta2016modeling}. We also simulate the proposed model on the following real-world networks.

\begin{enumerate}

\item Facebook Network: Facebook is the most popular online social networking website today. This dataset is the induced subgraph of Facebook \cite{facebook}, where users are represented by nodes and friendships are represented by edges. It contains 63,392 nodes and 816,831 edges. 
\item Twitter Network: This is an induced subgraph of Twitter \cite{twitter}. Each node is a Twitter user, and each directed edge from user A to user B means that user A follows user B. This is converted into undirected network for the study and it contains 81,306 nodes and 1,342,296 edges.
\end{enumerate}

To simulate the proposed model, two copies of real-world networks are created to represent BC and FC community. The results for Facebook and Twitter networks are shown in Figure \ref{fig:fig2}. The average social distance of real-world network is also compared with the math model by taking the same number of nodes and network densities. The results show that the average social distance decreases logarithmically in real-world networks. The bridges are placed randomly and a node can be connected with multiple bridges, so, the distance is not converged to 1 after placing $n_1$ number of bridges where $n_1$ is the size of BC community. 

\begin{figure}[ht!]
     \begin{center}
        \subfigure[]{%
            \label{fig:first}
            \includegraphics[width=6cm]{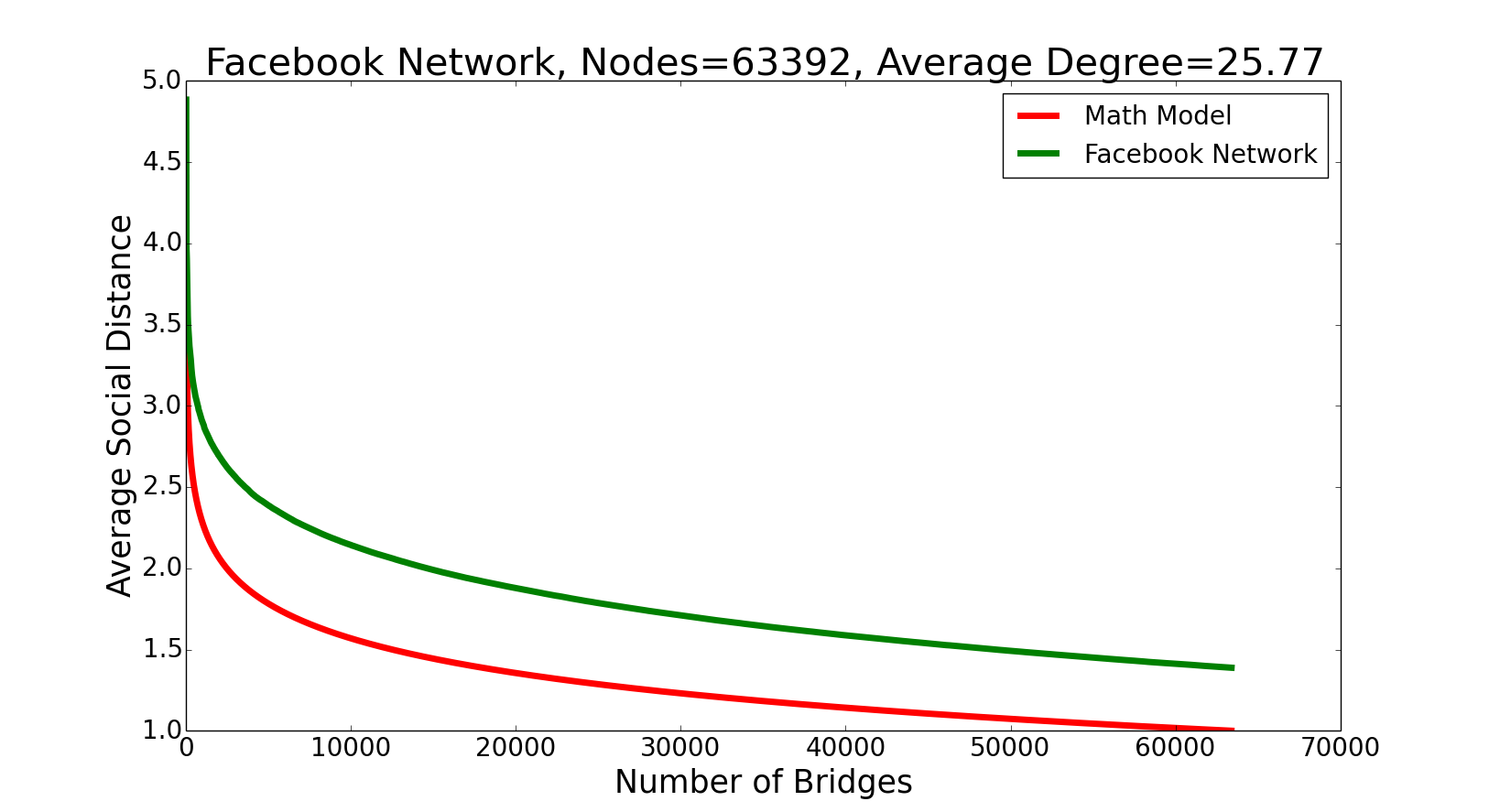}   
        }%
        \subfigure[]{%
           \label{fig:second}
           \includegraphics[width=6cm]{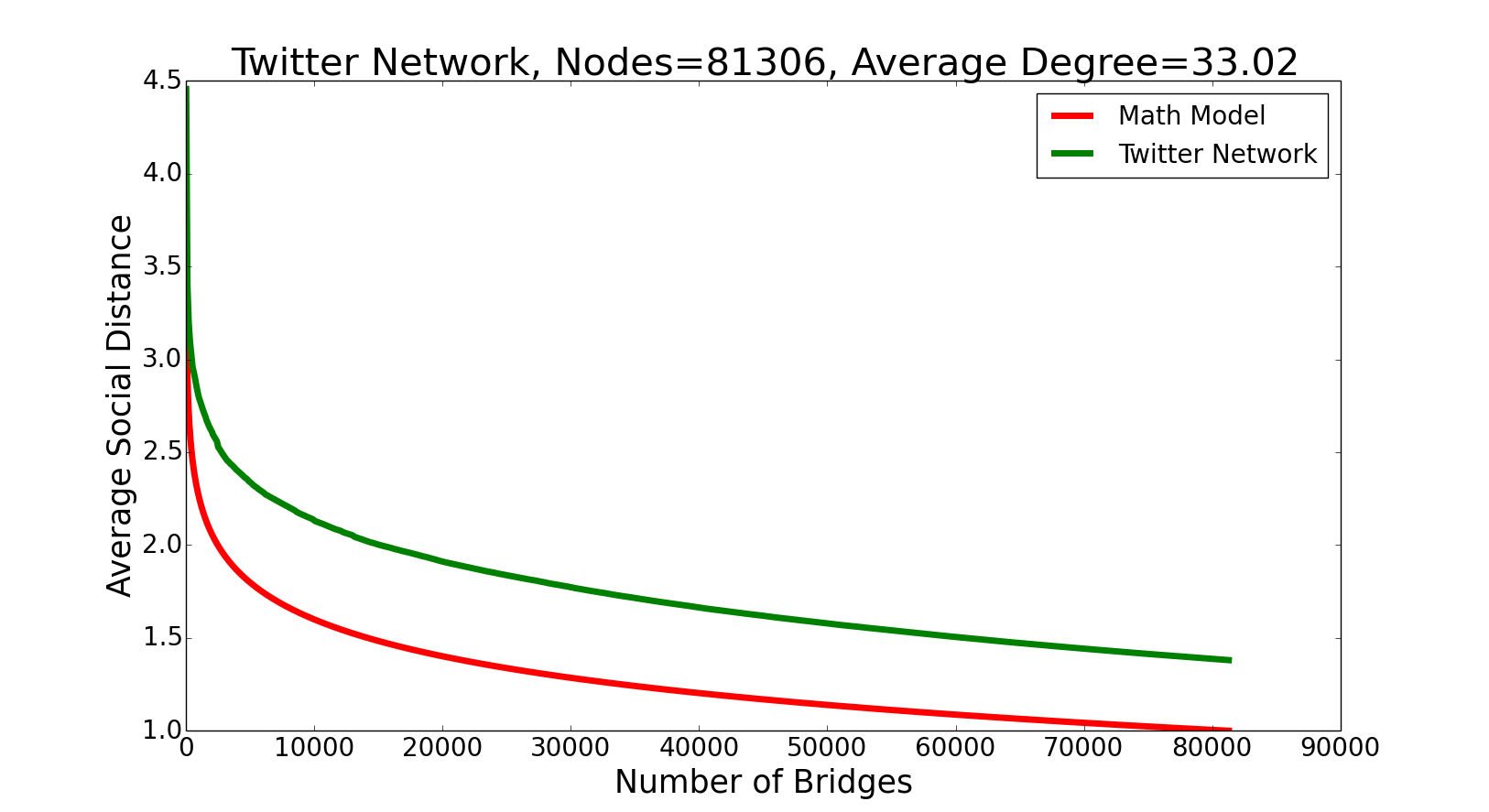}
        }
    \end{center}
    \caption{Average social distance on real world networks, a. Facebook, and b. Twitter}
   \label{fig:fig2}
\end{figure}

The simulation results support the proposed model. The results prove that the impact of placing more bridges decreases with time and a small number of bridges are sufficient to maintain the harmonic distance between the communities.

\section{Conclusion and Future Directions}
The Indian society has suffered for a long time due to the discriminatory system of caste-based segregation that initially arose from the concepts of spiritual purity. Indian government established the reservation system to provide equal opportunities in education and employment to classes that have historically been denied access to the resources. In our paper, we looked at the reservation system from a purely network theoretic perspective, by modeling the polarized Indian society in the form of a homophilic network and considering reservation to be the phenomenon by which link formation between the two polar groups is initiated.  

We defined the social capital associated with each individual in the backward class as a function of its social distance to forward class, that quantifies an individual's access to education and employment opportunities. In the proposed model, we studied the increase in social capital of a member of the disadvantaged group as a function of the number of inter-group links (reservation opportunities). We noted that a very small number of links between the two groups are enough for the cumulative benefit to increase rapidly. To the best of our knowledge, such a model of the reservation system is the first of its kind.


As a part of future work, we plan to investigate a larger variety of social applications wherein such a disparity exists, and apply a similar system in order to test its efficiency. This will allow us to further analyze different parameters involved in the cumulative social capital of a group. Additionally, we plan on studying the social structure within the Indian subcontinent in greater detail, and arrive at a specific percentage of reservation, called as the \emph{Ideal/Optimum Number}, which if applied, will cause great amount of uplift within shortest time, and will balance the distribution of resources between both groups.


%

We will further analyze the changes in the opinion of people towards other community that is necessary for the harmonic existence of a society. This also has been the one important motive of the reservation system. We would like to propose a mathematical model to study this phenomenon and the opinion formation in the proposed model. This will help to understand the harmonic stability in the society due to the opportunities given by the Reservation System.

\bibliographystyle{unsrt}
\bibliography{mybib.bib}

\section*{Appendix}
\begin{subappendices}
\renewcommand{\thesection}{\Alph{section}}%

\section{Proof of Lemma 1}

\begin{proof}
\begin{align*}
\dfrac{n!}{(n-l)!} &\sim \sqrt{2\pi n} \left(\dfrac{n}{e}\right)^n
 \dfrac{1}{\sqrt{2\pi (n-l)}} \left(\dfrac{e}{(n-l)}\right)^{n-l} \tag{using Stirling's approx.} 	\\
&\sim   \left(\dfrac{n}{e}\right)	^l\left(1-\dfrac{l}{n}\right)^{-(n-l+\frac{1}{2})}	\\
&\sim  \left(\dfrac{n}{e}\right)^l(e)^{l(n-l+\frac{1}{2})/n}	\\
&\sim   n^l
\end{align*}
\qed
\end{proof}

\section{Proof of Lemma 4}

\begin{proof}

\begin{align*}
P(|X - \mu_X | \geq a) &\leq \dfrac{\sigma^{2}_{X}}{a^2} \tag{Chebyshev's inequality}		\\
\implies P(|X - \mu_X | \geq \mu_X) &\leq \dfrac{\sigma^{2}_{X}}{\mu_X^2} 	 \\
 P(X=0) &\leq P(|X - \mu_X | \geq \mu_X)	\\
 &\leq \dfrac{\sigma^{2}_{X}}{\mu_X^2} \\
\end{align*}
\qed
\end{proof}

\section{Barabasi-Albert Model}
In 1999, Barabasi and Albert observed that real-world networks are not random. They observed that real-world networks follow power law degree distribution that indicates that there exists very few nodes having higher degrees and most of the nodes having lower degrees. The real world networks having power law degree distribution are also called scale-free networks. In scale-free networks, the probability $f(k)$ of a node having degree $k$ is defined as,

\begin{center}\begin{equation} \label{eq:fk}
f(k) = ck^{-\gamma}
\end{equation}\end{center}
where, $\gamma$ is the power law exponent, and for real-world scale-free networks its range is $2 < \gamma < 3$.

Based on this observation, they proposed an evolutionary preferential attachment model to generate synthetic networks that follow the properties of real-world scale-free complex networks \cite{bamodel}. This model starts with a seed graph having $n_0$ nodes that are connected with each other. At each time stamp, a new node is added to the network and it makes connections with $m$ already existing nodes. The probability $\prod(u)$ of an existing node $u$ to get a new connection is directly proportional to its degree $deg(u)$. It is defined as,

\begin{center}
$\prod(u) = \frac{deg(u)}{\sum_{\forall v}deg(v)}$
\end{center}

So, the nodes having higher degrees acquire more links over time, and the degree distribution is skewed towards lower degrees. As the network grows, only a few nodes called hubs manage to get a large number of links.

\section{Survey Details}
We conducted a survey\footnote{The survey form is available at \\ https://docs.google.com/a/iitrpr.ac.in/forms/d/1afOga9PbEQeNe3wYVFHKQ1f8JcAoyYIuOlXW3lAJD2A} amongst the very people directly involved and are affected by caste based reservation system. We gathered opinion from both ends, students who have availed the reservation and general merit students.  After garnering 1005 responses from the survey conducted across various educational institutions following reservation system, we came up with few observations corresponding to our result. In our survey, we observe how much change has already swept over both communities from either side, which is the direct result of the reservation system.
All people who participated in survey belong to age 17-36.
Below are the questions that we asked in the survey:
\begin{itemize}
\item Please select type of institution where you have studied.
\begin{enumerate}
\item IIT/NIT/IISERS/govt Institution
\item Semi Government
\item Private
\end{enumerate}
\item Have you ever availed the reservation? Choose one of the options.
\begin{enumerate}
\item No
\item Yes (caste-reservation/management quota)
\end{enumerate}
\item Average number of relatives (family members) do you see in an year.
\item The percentage of family members that you can influence strongly with your opinion. (answer in \%)?
\item Question for NON-RESERVED Category Students: What percentage of reserved category (caste-reservation/management quota) students have impressed you with their skills?(answer in \%)?
\item Question for RESERVED Category Students: The overall percentage of your relatives (younger ones) who look up to you for inspiration to try and achieve like you. (answer in \%)?
\item Question for ALL Students (Answer it based on your view): What percentage of your family members mentioned above, can influence their friends and connections with your opinion (your academic story)? (answer in \%)?
\end{itemize}

\end{subappendices}

\end{document}